\documentclass[journal]{IEEEtran}

\usepackage[cmex10]{amsmath}
\usepackage{amssymb}
\usepackage[numbers,sort&compress]{natbib}
\usepackage{algorithm,algpseudocode}
\usepackage[short]{optidef}
\usepackage{siunitx}
\usepackage{xcolor}
\usepackage{commath}  

\newcommand{\rbr}[2][-1]{\del[#1]{#2}}           
\newcommand{\defn}{\coloneqq}                    
\newcommand{\R}{\mathbb{R}}                      
\newcommand{\tran}[1]{#1^\mathsf{T}}             
\DeclareMathOperator{\grad}{\nabla\!}            
\newcommand{\x}{\times}                          

\newcommand\mydots{\hbox to 1em{.\hss.\hss.}} 

\newcommand\AverageSmallMatrix[1]{{%
  \footnotesize\arraycolsep=0.22\arraycolsep\ensuremath{\begin{bmatrix}#1\end{bmatrix}}}}

\newcommand{\Lagr}{\mathcal{L}}

\newtheorem{theorem}{Theorem}

\newtheorem{assumption}{Assumption}
\newtheorem{definition}{Definition}

\newtheorem{remark}{Remark}

\hyphenation{op-tical net-works semi-conduc-tor}

\begin{document}

\title{Control Barrier Functions for Cyber-Physical Systems and Applications to NMPC}

\author{%
    Jan~Schilliger\textsuperscript{1},
    Thomas~Lew\textsuperscript{2},
    Spencer~M.~Richards\textsuperscript{2},
    Severin~H\"{a}nggi\textsuperscript{1},
    Marco~Pavone\textsuperscript{2},
    and~Christopher~Onder\textsuperscript{1}%
    \thanks{\textsuperscript{1}Jan Schilliger, Severin H\"{a}nggi, and Christopher Onder are with the Institute for Dynamic Systems and Control (IDSC), ETH Z\"{u}rich, Z\"{u}rich, Switzerland. \{janschi, shaenggi, onder\} @ethz.ch}%
    \thanks{\textsuperscript{2}Thomas Lew, Spencer~M.\ Richards, and Marco Pavone are with the Department of Aeronautics and Astronautics, Stanford University, Stanford, CA 94305. \{thomas.lew, spenrich, pavone\} @stanford.edu}%
    \thanks{Jan Schilliger is partially supported by the Master's Thesis Grant of the Zeno Karl Schindler Foundation for his work at the Autonomous Systems Lab, Stanford University.}%
}


\maketitle

\begin{abstract}\boldmath\textbf{%
   Tractable safety-ensuring algorithms for cyber-physical systems are important in critical applications. Approaches based on Control Barrier Functions assume continuous enforcement, which is not possible in an online fashion. This paper presents two tractable algorithms to ensure forward invariance of discrete-time controlled cyber-physical systems. Both approaches are based on Control Barrier Functions to provide strict mathematical safety guarantees. The first algorithm exploits Lipschitz continuity and formulates the safety condition as a robust program which is subsequently relaxed to a set of affine conditions. The second algorithm is inspired by tube-NMPC and uses an affine Control Barrier Function formulation in conjunction with an auxiliary controller to guarantee safety of the system. We combine an approximate NMPC controller with the second algorithm to guarantee strict safety despite approximated constraints and show its effectiveness experimentally on a mini-Segway.%
}\end{abstract}

\begin{IEEEkeywords}
    Optimization and Optimal Control, Robot safety, control barrier functions, nonlinear model predictive control
\end{IEEEkeywords}

\IEEEpeerreviewmaketitle

\section{Introduction}\label{sec:intro}




Two cornerstones of safety-critical control for robotic systems are ``stability'' (i.e., convergence towards desired behavior) and ``safety'' (i.e., remaining within a designated set of safe states). A key challenge to deployment of cyber-physical systems is the design of fast, tractable control algorithms with \emph{safety guarantees} that hold in the face of real-world challenges such as discretization error. For instance, as shown in Figure~\ref{fig:mini-segway_1}, a mini-Segway must maintain a bounded pitch angle at all times to avoid falling over.


Nonlinear Model Predictive Control (NMPC) is a promising method for safety-critical control. In NMPC, a control input is obtained as the first input of an optimal trajectory computed by a constrained optimization at each time step. Indeed, the NMPC problem can encode safety with state and input constraints, but these are generally only enforced at discrete time steps. One approach to obtain continuous-time constraint enforcement is to account for discretization error in tube NMPC \cite{kogel2015discrete}. However, NMPC is computationally demanding and thus remains an open problem for systems with pronounced nonlinearities, high dimensionality, or fast response times \cite{liao2019time}. To address this, an NMPC problem can be replaced with an approximation that is easier to solve \cite{desaraju2016fast,neunert2016fast,zanelli2019efficient}. For example, the Real-Time Iteration (RTI) scheme approximates an NMPC problem with a Quadratic Program (QP) at each time-step \cite{diehl2005real}. While RTI has been shown to be stabilizing, strict constraint satisfaction, recursive feasibility, and hence safety are no longer guaranteed, due to linearly approximated constraints \cite{zanelli2016efficient,zanelli2019efficient}.

\begin{figure}[t]
	\centering
	\includegraphics[width=0.9\linewidth]{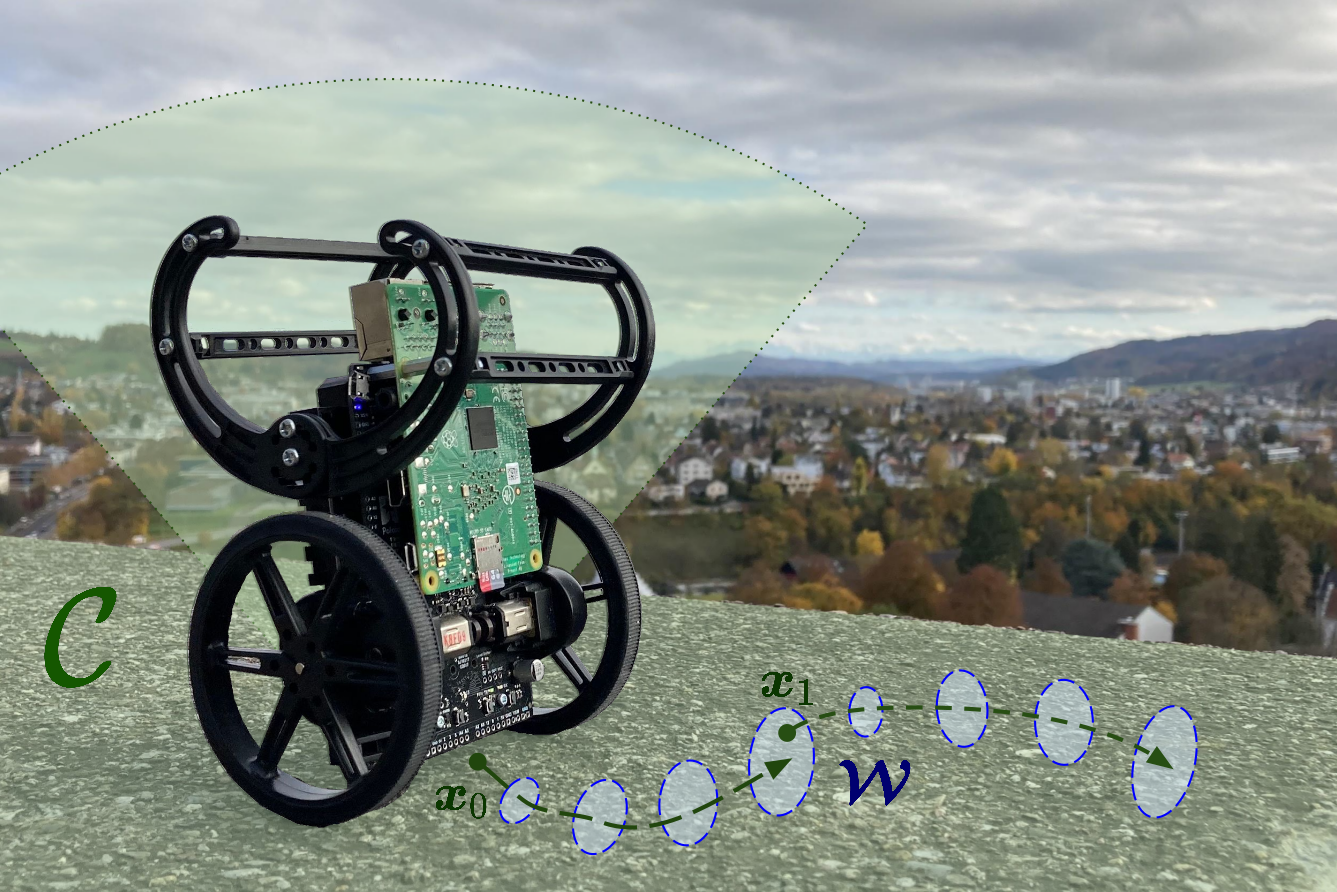}
	\caption{Safety-critical systems require fast and tractable control algorithms. These should stabilize the system, while always remaining safe. For instance, the mini-Segway shown in the figure must follow a reference position without falling down, i.e., satisfy safety bounds on the pitch angle at all times.}
	\label{fig:mini-segway_1}
\end{figure}

A popular tactic in ensuring both safety and stability is to de-couple the two tasks \cite{magdici2017adaptive,schouwenaars2006safe,bak2009system,gurriet2018towards}. To this end, a stabilizing controller can be applied jointly with some form of safety certificate function, such as a Control Barrier Function (CBF). Specifically, for nonlinear control-affine systems, CBFs can be used to reformulate nonlinear, nonconvex constraints as affine constraints point-wise in time. CBFs have been applied to discrete \cite{agrawal2017discrete}, stochastic \cite{ahmadi2019safe,takano2018application}, and hybrid systems \cite{nguyen20163d} as a part of control Lyapunov function-based controllers \cite{ames2016control,xu2015robustness,nguyen20163d}, 
and dedicated safety filters \cite{gurriet2018towards}. However, the notion of safety in CBFs relies on the assumption that one has continuous access to the system states and has the ability to continuously modify the states. However, these assumptions do not hold in systems with discrete-time control schemes \cite{gurriet2019realizable}.
An extension of CBFs to such systems is presented in \cite{singletary2020control}, where the point-wise CBF constraints must be robustly satisfied over a set. This leads to robust optimization problems, which may be difficult to solve.

\paragraph*{Contributions} This paper proposes a new tractable approach to safely control continuous-time dynamical systems using discrete-time controllers. Specifically:
\begin{itemize}
    \item We extend CBFs to account for discretization error, and guarantee constraint satisfaction for nonlinear control-affine systems under \textit{discrete-time} nominal controllers.
    \item We combine CBFs and approximate NMPC into an efficient algorithm with continuous-time safety guarantees.
    \item We validate our proposed approach in hardware experiments on a mini-Segway, and demonstrate the need to account for discretization errors to guarantee safety at all times given limited computational resources.
\end{itemize}    

\paragraph*{Notation} For a vector $v \in \R^n$ or vector-valued function $v : \R^n \to \R^m$, we use $v_i$ to denote its $i$-th component. We denote the Euclidean norm as $\norm{v} = \sqrt{\tran{v}v}$. A function ${f : \R^n \to \R^m}$ is Lipschitz continuous on a set $\mathcal{X} \subseteq \R^n$ if there exists $L_f \in \R$ such that $\norm{f(x) - f(y)} \leq L_f\norm{x - y}$ for all $x,y\in\mathcal{X}$. If $f$ is also differentiable, then $\norm{\grad f(x)} \leq L_f$ for all $x \in \mathcal{X}$, so we use $L_f \defn \max_{x \in \mathcal{X}} \norm{\grad f(x)}$ when $\mathcal{X}$ is compact. We denote the Minkowski sum of two sets $\mathcal{X} \subset \R^n$ and $\mathcal{Y} \subset \R^n$ as $\mathcal{X} \oplus \mathcal{Y}$, and similarly the Pontryagin difference as $\mathcal{X} \ominus \mathcal{Y}$. We denote the interior and boundary of a set $\mathcal{S}$ as $\mathrm{int}(\mathcal{S})$ and $\partial \mathcal{S}$, respectively.




\section{Continuous Time Control Barrier Functions}\label{sec:conttimecbf}

Consider the nonlinear control-affine dynamical system
\begin{equation}\label{eq:controlaffinesystem}
	\dot{x}(t)	= f(x(t)) + B(x(t))u(t),
\end{equation}
with state $x(t) \in \R^n$ and control input $u(t) \in \R^m$, where $f: \R^n \rightarrow \R^n$ and $B: \R^n \rightarrow \R^{n\times m}$ are Lipschitz continuous. In this section, we largely follow prior work \cite{wieland2007constructive,ames2016control,ames2019control} to formalize ``safety'' as controlling the state of \eqref{eq:controlaffinesystem} to remain within a designated safe set $\mathcal{X} \subset \R^n$ using only control inputs from an admissible set $\mathcal{U} \subset \R^m$. To this end, we search for a subset $\mathcal{C} \subseteq \mathcal{X}$ which is \emph{controlled invariant}, i.e., such that for each $x(0) \in \mathcal{C}$ there exists an admissible input trajectory $u(t) \in \mathcal{U}$ such that $x(t) \in \mathcal{C}$ for all $t \geq 0$. Our objective is to render this $\mathcal{C}$ \textit{forward invariant}, i.e. design a controller, such that $x(t) \in \mathcal{C}$ for all $t \geq 0$. Specifically, we restrict ourselves to the case where, given a continuously differentiable function $h : \mathcal{X} \to \R$ such that $\grad{h}(x) \neq 0$ whenever $h(x) = 0$, we define $\mathcal{C}$ as the super-level set
\begin{equation}\label{def:safeset}
	\mathcal{C} = \{x \in \mathcal{X} \mid h(x) \geq 0\},
\end{equation}
and assume $0\in\mathcal{C}$ without loss of generality. 
The description in \eqref{def:safeset} conveniently ensures that $\mathcal{C}$ is controlled invariant if and only if for each $x \in \partial{C}$, there exists an input $u \in \mathcal{U}$ such that
\begin{equation}\label{eq:alternativerepresentationofcontrolledinvariance}
	\dot{h}(x,u) = \tran{\grad{h}(x)}(f(x) + B(x)u) \geq 0.
\end{equation}
We use property \eqref{eq:alternativerepresentationofcontrolledinvariance} to define CBFs.

\begin{definition}[Control Barrier Function]
	Let $h : \mathcal{X} \to \R$ be continuously differentiable and satisfy $\grad{h}(x) \neq 0$ whenever $h(x) = 0$. Then $h$ is a Control Barrier Function (CBF) for the dynamical system \eqref{eq:controlaffinesystem} if there exists an extended class-$\mathcal{K}$ function\footnote{A continuous function $\beta : (-b,a) \rightarrow (-\infty,\infty)$ for some $a,b > 0$ is said to belong to extended class-$\mathcal{K}$ if $\beta$ is strictly increasing and $\beta(0) = 0$.} $\alpha : \R \to \R$ such that
	\begin{equation}\label{eq:CBFsupFormulation}
		\sup_{u \in \mathcal{U}} \tran{\grad{h}(x)}(f(x) + B(x)u) \geq -\alpha(h(x)),
	\end{equation}
	for all $x$ satisfying $h(x) \geq 0$.
\end{definition}

Finding a CBF for the system~\eqref{eq:controlaffinesystem} is sufficient to guarantee that~\eqref{eq:alternativerepresentationofcontrolledinvariance} holds and hence that the system is safe. We slightly rearrange~\eqref{eq:CBFsupFormulation} into
\begin{equation}\label{eq:CBFFormulation}
	\mathrm{CBF}(x,u) \defn \tran{\grad{h}(x)}(f(x) + B(x)u) + \alpha(h(x)) \geq 0,
\end{equation}
which we term the \emph{affine CBF condition} to highlight that it is indeed affine in the control input~$u$. Thus, \eqref{eq:CBFFormulation} can be embedded as a simple affine constraint to design safe optimization-based controllers \cite{ames2016control} and safety filters \cite{gurriet2018towards}. 

However, the affine CBF condition \eqref{eq:CBFFormulation} assumes the control signal~$u(t)$ is applied in continuous-time, while in practice, controllers operate at discrete time instants. Thus, we generally lose the safety guarantees provided by CBFs. Moreover, prior work on CBFs implicitly assumes the control input $u$ chosen to satisfy the affine CBF condition~\eqref{eq:alternativerepresentationofcontrolledinvariance} lies in the admissible set~$\mathcal{U}$, while in general this may not hold.

\section{Problem Definition}

The objective of this work is to design a control law using only admissible inputs from $\mathcal{U}$ which steer the system~\eqref{eq:controlaffinesystem} to a desired state $x_d \in \mathcal{X}$ \emph{safely}, i.e., we require $x(t) \in \mathcal{X}$ for all $t \geq 0$. Specifically, we assume $\mathcal{X} \subset \R^n$ is a compact set encoding any safety constraints, while $\mathcal{U} \subset \R^m$ is a convex polytopic set encoding control input constraints. Furthermore, we consider discrete-time controllers of the form 
\begin{equation}\label{eq:dt_controller}
	u(\tau) = u_k,\ \forall \tau \in [t_k, t_{k+1}).
\end{equation}
The input $u_k \in \mathcal{U}$ is computed at the time instant ${t_k \geq 0}$ and applied over the interval $[t_k, t_{k+1})$, where ${t_{k+1} - t_k = T}$ for all $k \in \mathbb{N}_{0}$ and some sampling time $T > 0$, which corresponds to a zero-order hold.

As stated in Section~\ref{sec:conttimecbf}, controllers derived using CBFs rely on the continuous enforcement of the affine CBF condition~\eqref{eq:CBFFormulation}, while in practice this condition can only be enforced at each sample time $t_k$. This discretization contradicts prior continuous-time analyses, thus safety constraints may not hold at all times $\tau \in [t_k,t_{k+1})$. Alternatively, NMPC controllers only consider constraints enforcement at a finite number of discretization nodes $\{t_k\}_{k=0}^N$. Although both approaches may lead to controllers satisfying constraints at all times given \emph{fast enough} update rates, computational limitations motivate a finer analysis to explicitly account for this type of error and guarantee constraint satisfaction at all times.


\section{Discrete-Time Control Barrier Functions with Continuous Constraint Satisfaction}\label{controlledinvariancefordtcontrollers}

In this section, we derive two controllers of the form \eqref{eq:dt_controller} which guarantee constraint satisfaction at all times. To this end, we first characterize the maximum variation of the CBF condition \eqref{eq:CBFFormulation} over a time interval, and use it to derive a discrete-time CBF condition which results in a general and intuitive controller which explicitly accounts for the discretization error. For the second controller we propose, we exploit additional problem structure in the form of an auxiliary controller. We then characterize the difference between an ideal but intractable continuous-time controller which leverages the affine CBF condition \eqref{eq:CBFFormulation}, and a tractable discrete-time approximation which follows \eqref{eq:dt_controller}. With this analysis, we provide a tube-based CBF controller with continuous-time safety guarantees. 

\subsection{Discrete Barrier Condition}\label{subsec:discretebarriercondition}
Consider the affine CBF condition \eqref{eq:CBFFormulation} for some safe set~$\mathcal{C}$ and a corresponding CBF~$h$. For a discrete-time controller~\eqref{eq:dt_controller}, we want to bound the maximum change in the CBF condition over some time to construct a guarantee that holds over an entire time interval. To this end, consider some differentiable, Lipschitz continuous function ${\phi : \R^n \to \R^d}$. To bound the change in $\phi$ along the trajectory $(x(t), u(t))$, we seek a constant $\widetilde{L}_\phi$ such that
\begin{equation}
    \norm{\phi(x(t + T)) - \phi(x(t))} \leq \widetilde{L}_\phi T.
\end{equation}
By the chain rule, $\dot{\phi}(t) = \tran{\grad\phi(x(t))}(f(x(t)) + B(x(t))u(t))$, and thus we can use
\begin{equation}\label{eq:lipschitz_const_time}
    \widetilde{L}_\phi \defn L_\phi\max_{x\in\mathcal{X},u\in\mathcal{U}}\rbr{f(x) + B(x)u}.
\end{equation}
This augments the Lipschitz constant $L_\phi$ for $\phi$ into a Lipschitz-like constant $\widetilde{L}_\phi$ with respect to time along \emph{any} state-input trajectory of the dynamical system. Then, over the time interval $\tau \in [t_k, t_k + T)$, we can bound the evolution of~$\phi$ along $(x(t),u(t))$ according to
\begin{equation}\label{eq:boundedevolution}
    \phi(x(\tau)) \in \phi(x(t_k)) \oplus \mathcal{W}_\phi,\ \forall \tau \in [t_k, t_k + T),
\end{equation}
where we define the truncation error hypercube
\begin{equation}
    \mathcal{W}_\phi \defn \{w \in \R^d \mid \norm{w}_\infty \leq \widetilde{L}_\phi T\}.
\end{equation}
The affine CBF condition~\eqref{eq:CBFFormulation} depends on $f$, $B$, $h$, and $\grad{h}$, so we can treat
\begin{equation}\label{eq:disturbanceset}
    \mathcal{W} \defn \mathcal{W}_f \x \mathcal{W}_B \x \mathcal{W}_h \x \mathcal{W}_{\grad h}
\end{equation}
as a set of possible disturbances $w = (w_f, w_B, w_h, w_{\grad h})$ which should be accounted for to guarantee safety at all times $\tau \in [t_k,t_{k+1})$. From this observation, we introduce the \emph{Discrete-time Barrier Condition (DBC)}
\begin{equation}\begin{aligned}
    &\mathrm{DBC}(x,u,w) \\
    &\defn \tran{(\grad{h}(x) + w_{\grad{h}})}\!\rbr{f(x) + w_f + (B(x) + w_B)u} \\
    &\quad + \alpha(h(x) + w_h).
\end{aligned}\end{equation}
At time $t_k$, this discrete barrier condition captures all possible system evolutions, thus guaranteeing that the affine CBF condition \eqref{eq:CBFFormulation} holds for an entire time interval $\tau \in [t_k,t_{k+1})$. Further, if a discrete time controller $u_k$ is synthesized such that it enforces this condition at all discrete times $t_k$, then the system is safe for all times under this controller. These two statements are formalized in the following theorem.
\begin{theorem}[Controlled Invariance Using a DBC]\label{thm:discreteCBF}
	Consider the dynamical system in \eqref{eq:controlaffinesystem} and a safe set $\mathcal{C}$ with a corresponding CBF $h$. Assume that the control law $u(\tau) = u_k \in \mathcal{U}$, $\tau \in [t_k,t_{k+1})$ satisfies 
	\begin{equation}\label{eq:DCBF}
	    \mathrm{DBC}(x(t_k),u_k,w) \geq 0,
	\end{equation}
	for all $w \in \mathcal{W}$, where $\mathcal{W}$ is defined as in \eqref{eq:disturbanceset}. Then $\mathcal{C}$ is forward invariant for \eqref{eq:controlaffinesystem} at all times $\tau \in [t_k,t_{k+1})$. Furthermore, if $x(0) \in \mathcal{C}$ and the discrete-time controller with $u_k \in \mathcal{U}$ satisfies \eqref{eq:DCBF} at every time step $t_k$ for $k \in \mathbb{N}_0$, then the system is safe for all $t \geq 0$.
\end{theorem}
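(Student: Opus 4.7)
The plan is to reduce the robust discrete-time condition \eqref{eq:DCBF} to the ordinary continuous-time affine CBF condition \eqref{eq:CBFFormulation} pointwise along the closed-loop trajectory, and then invoke the standard forward-invariance argument for CBFs on each interval, chaining intervals by induction on $k$. The key observation is that the disturbance set $\mathcal{W}$ was precisely constructed to over-approximate the true variations of $f$, $B$, $h$, and $\grad h$ between sampling times, so every admissible trajectory corresponds to some $w \in \mathcal{W}$ in the DBC.

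First, I would fix an arbitrary $k$, assume $x(t_k) \in \mathcal{C}$, and pick any $\tau \in [t_k, t_{k+1})$. I would define the canonical disturbance
\begin{equation*}
w^\tau \defn \rbr{f(x(\tau)) - f(x(t_k)),\, B(x(\tau)) - B(x(t_k)),\, h(x(\tau)) - h(x(t_k)),\, \grad h(x(\tau)) - \grad h(x(t_k))},
\end{equation*}
and apply the bounded-evolution inclusion \eqref{eq:boundedevolution} componentwise to $\phi \in \{f, B, h, \grad h\}$ to show that $w^\tau \in \mathcal{W}$. A direct substitution then gives the identity $\mathrm{DBC}(x(t_k), u_k, w^\tau) = \mathrm{CBF}(x(\tau), u_k)$, so the hypothesis \eqref{eq:DCBF} immediately yields $\mathrm{CBF}(x(\tau), u_k) \geq 0$ for every $\tau \in [t_k, t_{k+1})$. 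This is equivalent to $\dot h(x(\tau)) \geq -\alpha(h(x(\tau)))$ along the closed-loop flow on that interval, so applying a standard comparison argument against the scalar ODE $\dot y = -\alpha(y)$ (valid since $\alpha$ is extended class-$\mathcal{K}$ with $\alpha(0)=0$) shows that $h(x(t_k)) \geq 0$ propagates to $h(x(\tau)) \geq 0$ for all $\tau \in [t_k, t_{k+1})$, establishing the first assertion. The global claim then follows by induction: the base case is $x(0) \in \mathcal{C}$, and for the step I would use continuity of $x(\cdot)$ to pass $h(x(t_{k+1})) \geq 0$ into the next interval, where the assumption \eqref{eq:DCBF} is again in force by hypothesis on $u_{k+1}$.

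The main obstacle is a mild circularity in the first step: invoking \eqref{eq:boundedevolution} requires that $x(\tau)$ lies in the compact set $\mathcal{X}$ used to define $\widetilde L_\phi$, which is itself part of what we are proving. I would resolve this in the standard way by working on the maximal interval of existence on which $x$ remains in $\mathcal{X}$, showing that $h(x(\cdot))$ stays non-negative there, and using compactness of $\mathcal{C} \subseteq \mathcal{X}$ together with continuity of the trajectory to rule out finite escape from $\mathcal{X}$ before the next sampling instant. A secondary technical point worth flagging is that applying \eqref{eq:lipschitz_const_time} with $\phi = \grad h$ implicitly requires $h$ to be $C^2$ with $\grad h$ Lipschitz on $\mathcal{X}$, a regularity assumption slightly stronger than the $C^1$ hypothesis in the CBF definition; it is natural to state this explicitly as part of the theorem.
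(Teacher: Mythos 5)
Your proof follows essentially the same route as the paper's: you construct the particular disturbance $w^\tau$ realized by the actual trajectory, verify via \eqref{eq:boundedevolution} that it lies in $\mathcal{W}$, establish the identity $\mathrm{DBC}(x(t_k),u_k,w^\tau) = \mathrm{CBF}(x(\tau),u_k)$, and conclude by the standard CBF comparison argument and induction over intervals. Your two flagged caveats (the circularity in assuming $x(\tau)\in\mathcal{X}$ before invariance is established, and the implicit Lipschitz requirement on $\grad h$) are genuine gaps that the paper's own proof glosses over, so your treatment is if anything more complete.
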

\begin{proof} 
    Denote $x_t \defn x(t)$ for conciseness. We start with the proof of the first claim, and show that \eqref{eq:DCBF} holding at time~$t_k$ implies that \eqref{eq:CBFFormulation} holds for all times $\tau \in [t_k,t_{k+1})$. According to~\eqref{eq:boundedevolution}, let $w \in \mathcal{W}$ be such that ${f(x_\tau) = f(x_{t_k}) + w_f}$, $B(x_\tau) = B(x_{t_k}) + w_B$, $h(x_\tau) = h(x_{t_k}) + w_h$, and $\grad{h}(x_\tau) = \grad{h}(x_{t_k}) + w_{\grad{h}}$. Substituting these into the affine CBF condition~\eqref{eq:CBFFormulation} along with the given discrete-time control input yields
    \begin{equation*}\begin{aligned}
        &\mathrm{CBF}(x_\tau,u_k) \\
        &= \tran{\grad{h}(x_\tau)}(f(x_\tau) + B(x_\tau)u_k) + \alpha(h(x_\tau)) \\
        &= \tran{(\grad h(x_{t_k}) + w_{\grad h})}(f(x_{t_k}) + w_f + (B(x_{t_k}) + w_B)u_k) \\
        &\quad + \alpha(h(x_{t_k}) + w_h) \\
        &= \mathrm{DBC}(x_{t_k}, u_k, w)
    \end{aligned}\end{equation*}
    If $\mathrm{DBC}(x_{t_k}, u_k, w) \geq 0$ for all $w \in \mathcal{W}$, then this holds for the particular $w$ above, so~\eqref{eq:DCBF} ensures $\mathrm{CBF}(x_{\tau},u_k) \geq 0$ for all $\tau \in [t_k,t_{k+1})$. This, and the fact that~$h$ is a CBF, guarantee $x(\tau)\in\mathcal{C}$ for all $\tau \in [t_k,t_{k+1})$ as long as $x(t_k)\in\mathcal{C}$.
    
    The second statement follows; by assumption, $x(t_k) \in \mathcal{C}$ and there exists a $u_k \in \mathcal{U}$ such that \eqref{eq:DCBF} holds. By induction with the previous result, $\mathcal{C}$ is forward invariant under this discrete-time controller.
\end{proof}

\begin{remark} 
    We can reduce conservatism of our derivations in two ways. First, we can define the Lipschitz constant component-wise as $\|f(x)-f(x_0)\|\leq \smash{\sum_{i=1}^n} L_{f,i} |x_i-x_{0,i}|$, which is equivalent to normalizing each dimension of the dynamics. Second, local upper bounds can reduce conservatism of the Lipschitz-like constants.
\end{remark}

Since~\eqref{eq:DCBF} is nonconvex, it can be challenging to enforce for general dynamical systems. Inspired by work on CBF-based robust controllers \cite{gurriet2018towards}, we relax~\eqref{eq:DCBF} to a set of affine conditions, which can then be used to construct safety filters and safe controllers. To this end, we write~\eqref{eq:DCBF} as
\begin{equation}
    \mathrm{DBC}(x,u,w) = -\tran{a(x,w)}u - b(x,w),
\end{equation}
where
\begin{equation*}\begin{aligned}
    a(x,w) &\defn -\tran{(B(x) + w_B)}(\grad{h}(x) + w_{\grad{h}}) \\
    b(x,w) &\defn - (\grad{h}(x) + w_{\grad{h}})(f(x) + w_f) - \alpha(h(x) + w_h).
\end{aligned}\end{equation*}
Then $\mathrm{DBC}(x,u,w) \geq 0$ for all $w \in \mathcal{W}$ if and only if
\begin{equation}\label{eq:nonconvex_cbf_opt_condition}
    \max_{w \in \mathcal{W}} \rbr{\tran{a(x,w)}u + b(x,w)} \leq 0.
\end{equation}
For fixed $u$, the left-hand side of \eqref{eq:nonconvex_cbf_opt_condition} is nonconvex in~$w$. A sufficient relaxed condition for~\eqref{eq:nonconvex_cbf_opt_condition} to hold is
\begin{gather}
\label{eq:inproof}
\max_{\tilde{a}_{j}, \tilde{b}}  \ 
\Big(\tilde{a}^\textsf{T}u + \tilde{b}\Big) \leq 0,
\\
\text{s.t. } \ 
\tilde{a}_{j} \in\mathcal{A}^j_{\mathcal{W}} = \{a_{j}(x,w) | w \in \mathcal{W}\}, \ 
\text{ } \tilde{b} \in\{b(x,w) | w \in \mathcal{W}\}, 
\nonumber
\end{gather}
where we omit the dependency on $x$ for conciseness, and 
$j=1,\dots,m$. 
$\mathcal{A}^j_{\mathcal{W}}$ are nonconvex sets. 
Compared to \eqref{eq:nonconvex_cbf_opt_condition}, this last condition is conservative, since  a different disturbance $w$ can be chosen for each $j$-th dimension of $a$. 
As $\mathcal{W}$ and $a$ can be nonconvex, \eqref{eq:inproof} is nonconvex. 
This robust constraint can be relaxed as a set of affine constraints.

However, they are compact and thus can be outerbounded conservatively as polytopic sets. 
In this work for tractability, we consider conservative hyperrectangular sets of the form
\begin{align*}
\mathcal{D}^j_{\mathcal{W}} &= 
\{
\tilde{a}_j
\ | \ 
\tilde{a}_j \leq \bar{a}_j, 
\,
-\tilde{a}_j \leq -\underline{a}_j
\},\quad j=1,\dots,m,
\\
\mathcal{B}_{\mathcal{W}} &= 
\{
\tilde{b}
\ | \ 
\tilde{b} \leq \bar{b}, 
\,
-\tilde{b} \leq -\underline{b}
\},
\end{align*}
where 
$\underline{a}_j = \min_{\tilde{a}_j\in\mathcal{A}^j_{\mathcal{W}}} \tilde{a}_j$, and 
$\bar{a}_j = \max_{\tilde{a}_j\in\mathcal{A}^j_{\mathcal{W}}} \tilde{a}_j$. 
With these rectangular sets, we can rewrite the condition in \eqref{eq:inproof} as affine constraints of the form $D[\tilde{a},\tilde{b}]^\textsf{T}\leq d(x)$, where
\begin{equation}
D^\textsf{T}
{=}\, 
\AverageSmallMatrix{
1 & -1   & \ \mydots & \ 0 & \ 0 \\
0 &  \ 0 & \ \mydots & \ 0 & \ 0 \\
0 &  \ 0 & \ \mydots & \ 1 &  -1
},
\ 
d(x)^\textsf{T} 
{=}\, 
\AverageSmallMatrix{
\bar{a}_1 & 
-\underline{a}_1 & 
\mydots & 
\bar{a}_m & 
-\underline{a}_m & 
\bar{b} & 
-\underline{b}
}.
\end{equation}
Then, a sufficient condition for \eqref{eq:inproof} is given as
\begin{align}
\label{eq:test2}
\max_{\tilde{a}, \tilde{b}}  
\left(\tilde{a}^\textsf{T}u + \tilde{b}\right) \leq 0, 
\quad
\text{s.t. } \ 
D[\tilde{a},\tilde{b}]^\textsf{T}\leq d(x).
\end{align}
\eqref{eq:test2} is a linear program in $\tilde{a}$ and $\tilde{b}$ (guaranteed feasible by a proper choice of $\mathcal{C}$). Denoting  
$\tilde{u} \triangleq [u, 1]^\textsf{T}\in\R^{m+1}$ 
for the concatenation of $u$ with the scalar $1$, we can express its dual as
\begin{align}
\label{eq:test3}
\max_{\tilde{\lambda}}  &\left( d(x)^\textsf{T}\tilde{\lambda}\right) \leq 0, 
\quad
\text{s.t. } \ 
D^\textsf{T}\tilde{\lambda}= \tilde{u}, 
\quad 
\tilde{\lambda}\geq 0 ,
\end{align}
where $\tilde{\lambda}\in \R^{2(m+1)}$ are the dual variables. 
Since \eqref{eq:test2} is convex, its dual problem \eqref{eq:test3} is equivalent by strong duality. 
Any feasible solution of this problem will guarantee safety. These solutions must necessarily satisfy the following set of affine conditions
\begin{align}\label{eq:affineformulationthmone}
d(x)^\textsf{T}\tilde{\lambda}\leq 0,  \ \ \
D^\textsf{T}\tilde{\lambda}= \tilde{u}, \ \ \
\tilde{\lambda}\geq 0. 
\end{align}
Using these affine constraints, we propose an optimization-based controller which guarantees safety at all times. Specifically, given a nominal control input $u_k$, it consists of solving, at each time $t_k$, the following quadratic program (QP):
\begin{align}
\label{eq:affconditionsafetyfilter}
\min_{u, \tilde{\lambda}} \ \ &\| u - u_k \|^2, 
\quad \textrm{s.t.} \ \ \eqref{eq:affineformulationthmone}.
\end{align}

To summarize, the affine conditions \eqref{eq:affineformulationthmone} allow to continuously guarantee safety for our system \eqref{eq:controlaffinesystem}, if satisfied at discrete times $t_k$ only. The resulting formulation can be used as a safety filter or to synthesize controllers in combination with optimization-based control approaches, such as Control Lyapunov Functions-based controllers or MPC. We discuss the advantages and limitations in section \ref{sec:results}.

\subsection{Tube-CBF}
\label{subsec:tubecbf}
We present our second algorithm, Tube-CBF, which in contrast to the DBC, we uses the unaltered affine CBF condition from section \ref{sec:conttimecbf}.

First, consider a nominal case where we have an \emph{ideal} continuous-time controller fulfilling the affine CBF condition at all times and denote the nominal trajectory as $\bar{x}(t)$. As we use a discrete-time controller \eqref{eq:dt_controller}, naturally, discretization leads to a difference between the resulting trajectory $x(t)$ and $\bar{x}(t)$. As the affine CBF condition only guarantees safety for the nominal trajectory $\bar{x}(t)$, the true trajectory $x(t)$ is potentially outside of the safe set and may be unsafe. Inspired by tube MPC, we propose the use of a discrete-time auxiliary controller $\kappa$ to regulate the error $z(t) = x(t) - \bar{x}(t)$ to zero and ideally bound it in an invariant tube. Knowledge of such an auxiliary controller and a corresponding invariant tube together with the affine CBF condition, allows us to design a discrete-time controller \eqref{eq:dt_controller} with continuous-time safety guarantees. To this end, we first define a robust invariant set.

\begin{definition}[Robust Invariant Set with Discrete Feedback]\label{def:robust_inv_set_error}
	A set $\Omega \subset \mathcal{C} \subset \R^{n}$ is a robust control invariant set for the error $x-\bar{x}$ 
	if there exists an auxiliary feedback control law $\kappa:\R^n\times\R^n\rightarrow\R^m$ of the form \eqref{eq:dt_controller},  
	such that 
	$\kappa(x,\bar{x}) \in \mathcal{U}$ for all $x,\bar{x}\in\mathcal{C}$, 
	and under this controller, 
	if $x(0)-\bar{x}(0) \in \Omega$, 
	then $x(t)-\bar{x}(t) \in \Omega$, for all $t \geq 0$.
\end{definition}

Note that finding $\Omega$ and $\kappa$ is in general not straightforward. A rather coarse over-approximation of $\Omega$ with a corresponding $\kappa$ can be found in \cite{yu2013tube}.
This leads to the following assumption, which is strong but common in tube NMPC literature \cite{yu2013tube}:

\vspace{1mm}
\begin{assumption} 
	\label{ass:omega}
	Suppose we have a control law $\kappa$ of the form \eqref{eq:dt_controller}, and a set $\Omega$,  such that $\Omega$ is robust control invariant, i.e., $(\kappa,\Omega)$ satisfy definition \ref{def:robust_inv_set_error}.
\end{assumption}

We use the auxiliary controller to compensate discretization errors in addition to the nominal controller. In the presence of control input constraints, we need to account for the contribution to the control input from the auxiliary controller. We capture this in the following set:
$$G := \text{Poly}\left(\{
\kappa(x,\bar{x}) 
\,|\, 
\bar{x} \in \mathcal{C} \ominus \Omega, 
\ \text{and} \ 
x-\bar{x} \in \Omega
\}\right) \subset \R^m,
$$ 
where $\text{Poly}()$ denotes the smallest polytopic hull. To account for the error $z \in \Omega$, we define a reduced compact safe set $\mathcal{C}'$ as summarized in the following definition:

\begin{definition}[Reduced Safe Set]\label{def:reducedcompactsafeset}
	A reduced safe set $\mathcal{C}'$ to a set $\mathcal{C}$ is a compact set such that 
	$\mathcal{C}' \subseteq \mathcal{C} \ominus \Omega \subset \R^{n}$, where $\mathcal{C}' $ is characterized by a CBF $h$ under $\mathcal{U}' := \mathcal{U} \ominus G$.
\end{definition}

\begin{assumption}\label{ass:reducedcompactsafeset}
	We have access to a  reduced safe set $\mathcal{C}'$ according to definition \ref{def:reducedcompactsafeset}.
\end{assumption}

Building on assumptions \ref{ass:omega} and \ref{ass:reducedcompactsafeset}, we propose the Tube-CBF algorithm.

\begin{figure}[t]
	\centering
	\includegraphics[width=0.95\linewidth]{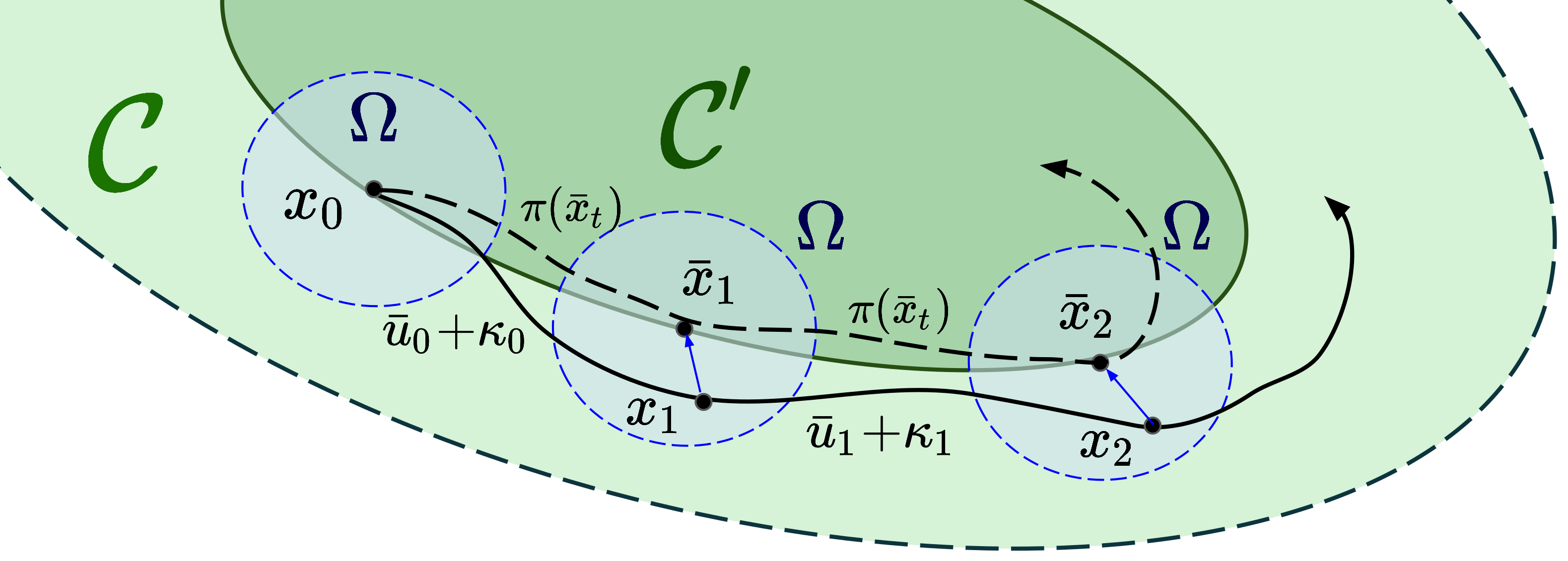}
	\vspace{-2mm}
	\caption{The Tube-CBF algorithm consists of applying the affine CBF condition to a reduced set $\mathcal{C}'$. 
	Then, the discretization error arising between the trajectories under an ideal continuous time controller (denoted as $\bar{x}$), and a discrete counterpart (denoted as $x$) is kept in the invariant tube $\Omega$ with an auxiliary controller $\kappa$.}
	\label{fig:theoremtwo}
	\vspace{-5mm}
\end{figure}

\textbf{Tube-CBF algorithm:}
\begin{enumerate}
	\label{alg:tubenmpcCBF}
		\item At time $t_0$, set $\bar{x}(t_0) = x(t_0) \in \mathcal{C}'$.
		\item At time $t_k$, find a nominal input $\bar{u}(t_k) \in \mathcal{U}'$, such that the affine CBF condition for $\mathcal{C}'$ holds at $\bar{x}(t_k)$. 
		\item Apply the control input $u(t_k) = \bar{u}(t_k) + \kappa(x(t_k), \bar{x}(t_k))$ to the system during the time interval $\tau \in [t_k,t_{k+1})$. 
		\item At time $t_{k+1}$, measure the state $x(t_{k+1})$ and find a state $\bar{x}(t_{k+1}) \in \mathcal{C}'$ such that $x(t_{k+1}) \in \bar{x}(t_{k+1}) \oplus \Omega$ and go to step 2.
\end{enumerate}
Figure \ref{fig:theoremtwo} depicts multiple time steps of the algorithm. Furthermore, we formalize the safety guarantees in the following theorem.

\begin{theorem}[Controlled Invariance Using Tube-CBF]\label{thm:tubeCBF}
	Consider the dynamical system in \eqref{eq:controlaffinesystem} and a safe set $\mathcal{C}$. Suppose  assumptions \ref{ass:omega} and \ref{ass:reducedcompactsafeset} hold.  Then, $\mathcal{C}$ is forward invariant under the Tube-CBF algorithm for \eqref{eq:controlaffinesystem} under \eqref{eq:dt_controller} at all times $t\geq 0$ if $x(0)\in\mathcal{C}'$. 
\end{theorem}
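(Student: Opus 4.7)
My plan is to prove Theorem~\ref{thm:tubeCBF} by induction on the discrete index $k$, carrying forward two invariants at each sample time: (i) $\bar{x}(t_k)\in\mathcal{C}'$, and (ii) $x(t_k)-\bar{x}(t_k)\in\Omega$. The base case follows immediately from step~1 of the Tube-CBF algorithm, which sets $\bar{x}(t_0)=x(t_0)\in\mathcal{C}'$, giving zero initial error (which lies in $\Omega$ since $\Omega$ is a robust invariant set containing the origin). The goal of each inductive step will be to promote these discrete-time invariants to a continuous-time safety conclusion on the interval $[t_k,t_{k+1})$ and then re-establish the invariants at $t_{k+1}$.

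For the inductive step, I would proceed in three sub-steps. First, I would check admissibility of the applied input: since step~2 chooses $\bar{u}(t_k)\in\mathcal{U}'=\mathcal{U}\ominus G$ and $\kappa(x(t_k),\bar{x}(t_k))\in G$ by the definition of $G$, the Pontryagin/Minkowski identity $(\mathcal{U}\ominus G)\oplus G\subseteq\mathcal{U}$ yields $u(t_k)=\bar{u}(t_k)+\kappa(x(t_k),\bar{x}(t_k))\in\mathcal{U}$. Second, I would argue that the nominal trajectory $\bar{x}(\tau)$ driven by $\bar{u}(t_k)$, which satisfies the affine CBF condition on $\mathcal{C}'$, remains in $\mathcal{C}'$ throughout $\tau\in[t_k,t_{k+1})$; this is the standard continuous-time CBF argument based on applying the comparison lemma to $\dot{h}(\bar{x})+\alpha(h(\bar{x}))\geq 0$. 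Third, Assumption~\ref{ass:omega} together with the auxiliary controller $\kappa$ ensures that the error $z(\tau)=x(\tau)-\bar{x}(\tau)$ stays in $\Omega$ for all $\tau\in[t_k,t_{k+1})$. Combining the second and third sub-steps via $x(\tau)=\bar{x}(\tau)+z(\tau)\in\mathcal{C}'\oplus\Omega\subseteq\mathcal{C}$ closes the interval, and at $t_{k+1}$ the reset in step~4 picks $\bar{x}(t_{k+1})\in\mathcal{C}'$ with $x(t_{k+1})\in\bar{x}(t_{k+1})\oplus\Omega$, which is feasible precisely because $x(t_{k+1})\in\mathcal{C}'\oplus\Omega$; this reinstates both invariants.

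The step I expect to be the main obstacle is the continuous-time safety of the nominal trajectory under the zero-order-held nominal input: the affine CBF condition evaluated at the single point $\bar{x}(t_k)$ does not by itself imply $h(\bar{x}(\tau))\geq 0$ for all $\tau\in[t_k,t_{k+1})$, since $\bar{x}$ drifts while $\bar{u}(t_k)$ is held constant. I would resolve this by reading step~2 in the spirit of tube NMPC, where the nominal controller is treated as an \emph{ideal} continuous-time CBF controller and the zero-order hold is only an implementation artifact on the \emph{actual} input; a fully rigorous alternative would be to impose the DBC from Section~\ref{subsec:discretebarriercondition} on the nominal trajectory in step~2, which restores continuous nominal safety without extra assumptions. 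A secondary issue worth checking in passing is nonemptiness of $\mathcal{U}'=\mathcal{U}\ominus G$ and of the step~2 feasible set, both of which are subsumed by Assumption~\ref{ass:reducedcompactsafeset}.
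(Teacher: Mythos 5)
Your proposal is correct and follows essentially the same route as the paper's proof: induction on $k$ with the two invariants $\bar{x}(t_k)\in\mathcal{C}'$ and $x(t_k)-\bar{x}(t_k)\in\Omega$, input admissibility via $\mathcal{U}'\oplus G\subseteq\mathcal{U}$, error containment from Assumption~\ref{ass:omega}, and the reset at step~4. The ``main obstacle'' you flag is resolved in the paper exactly as in your preferred reading --- $\bar{x}$ is by construction the trajectory of an \emph{ideal} continuous-time CBF controller (as introduced at the start of Section~\ref{subsec:tubecbf}), with the zero-order hold affecting only the actual input --- so your account matches the paper's argument, including its implicit assumptions.
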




\vspace{1mm}

\begin{proof}
We proceed by induction and consider first the induction step.
At time $t_k$, we assume we have $\bar{x}(t_k)\in\mathcal{C}'$ and $x(t_k) \in \bar{x}(t_k) \oplus \Omega$. Then, according to Step 2 of the Tube-CBF algorithm, we compute an input $\bar{u}(t_k)$ that satisfies the CBF condition for $\mathcal{C}'$ at $\bar{x}(t_k)$. A feasible control input exists by the definition of $\mathcal{C}'$. Next we apply the control input $u(t_k)=\bar{u}(t_k)+\kappa(x(t_k), \bar{x}(t_k))$ to the system over a time interval $t\in [t_k,t_{k+1})$ according to Step 3. 
By definition of $(\kappa,\Omega)$ and $\mathcal{C}'$, $u(t_k)\in\mathcal{U}$. 
Further, by definition of $(\kappa,\Omega)$, 
this guarantees that $x(\tau)-\bar{x}(\tau) \in \Omega$ for $\tau \in [t_k,t_{k+1})$. 
Also, $\bar{x}$ follows an ideal controller, such that $\bar{x}(\tau)\in\mathcal{C}'$ for $\tau \in [t_k,t_{k+1})$. 
Therefore, at Step 4, there exists a $\bar{x}(t_{k+1})$ such that
$x(t_{k+1}) \in \bar{x}(t_{k+1})\oplus \Omega$. We continue with $\bar{x}(t_{k+1})$ at step 2.

The initial condition $x(t_0) = \bar{x}(t_0) \in \mathcal{C}'$ completes the proof.

\end{proof}



\section{Continuous Safety Guarantees with NMPC}
\label{sec:NMPC}

In this section, we design an optimal controller of the form \eqref{eq:dt_controller}. We consider a direct NMPC approach to reduce the optimal control problem to a nonlinear program (NLP). We discretize the system dynamics \eqref{eq:controlaffinesystem}
\begin{equation}
	\label{eq:discretization}
	x(t_{k+1}) = f_k^d(x_k,u_k) = x(t_k) +\hspace{-0.1cm} \int_{t_k}^{t_{k+1}} \hspace*{-0.6cm}[f(x(\tau)) + B(x(\tau))u_k] \mathrm{d}\tau,
\end{equation}
where the integral in \eqref{eq:discretization} can be approximated using a numerical integration scheme of choice. In NMPC, we solve the NLP at each time step $t_k \geq 0$ to obtain the controller \eqref{eq:dt_controller}. The NLP formulation we consider is

\begin{align}
	\label{eq:NLP}
	\min_{X,U,S} l_N(x_N) + \zeta_N(s_{2N-1}) &+ \sum_{i=0}^{N-1} l_k(x_i,u_i) + \zeta_i(s_{2i},s_{2i+1}) \nonumber \\
	\text{s.t.} 
	\quad 
	x_{i+1} - f_i^d(x_i,u_i) &= s_{2i}, \quad i = 0, \ldots, N-1 \\
	g_i(x_i,u_i) &\leq s_{2i+1}, \ i = 0, \ldots, N-1 \nonumber 
	\\
	\qquad 
	\qquad
	\qquad
	x_0 - \hat{x} &= 0, \quad 
	g_N(x_N) \leq s_{2N-1},  
	\nonumber
\end{align}
with $X = [x_0, \ldots, x_N]^\textsf{T}$, $U = [u_0, \ldots, u_{N-1}]^\textsf{T}$, and $S = [s_0, \ldots, s_{2N-1}]^\textsf{T}$. At time $t_k$, the NLP is initialized with $\hat{x}$. Along the prediction horizon $i = [0,\ldots,N]$, the functions $l_k$ and $l_N$ denote the stage and terminal cost, while the functions $g_k$ and $g_N$ denote (in)-equality constraints. To guarantee feasibility of the NLP, we soften the constraints and introduce the slack variables $S$, which we penalize with a function of the form $\zeta_i(s_i) = \| s_i \| + s_i^2$.  To facilitate reading, we collect all decision variables in $v = [X^\textsf{T}, U^\textsf{T}, S^\textsf{T}]^\textsf{T}$ and split the constraints into equality constraints $I$ and inequality constraints $G$ to arrive at a general NLP formulation with the cost function $J$,
\begin{equation}
	\label{eq:generalNLP}
	\min_v J(v) \quad \text{s.t.} \ I(v) = 0, \ G(v)  \leq 0.
\end{equation}
We tackle \eqref{eq:generalNLP} using Sequential Quadratic Programming (SQP) \cite{nocedal2006sequential}. In SQP, we solve a series of  approximated subproblems to iteratively arrive at a solution to \eqref{eq:generalNLP}. Specifically, SQP methods minimize a second-order Taylor expansion of the Lagrangian of \eqref{eq:generalNLP},
\begin{equation}
	\
	\Lagr(v,\lambda,\mu) = J(v) + \lambda^\textsf{T}I(v) + \mu^\textsf{T}G(v),
\end{equation}
where $\lambda$ and $\mu \geq 0$ are the Lagrange multipliers. We start out with an initial guess $v^0$ and incrementally update it with computed update steps $\delta w^i$ according to $v^{i+1} = v^i + \delta v^i$. Minimizing  the second-order Taylor expansion $\Lagr_{SQP}(\delta v^i, \delta \lambda^i, \delta\mu^i) = T_{\Lagr(v^i,\lambda^i,\mu^i)}(\delta v^i, \delta \lambda^i, \delta\mu^i) $ is equivalent to solving the following (potentially nonconvex) QP
\begin{align}
	\label{eq:SQP}
	\min_{\delta v^i} \qquad &\nabla_v J(v^i)^\textsf{T} \delta v^i + \frac{1}{2} {\delta v^i}^\textsf{T} H(v^i,\lambda^i,\mu^i)\delta v^i \\
	\text{s.t.} \qquad &I(v^i) + \nabla_v I(v^i)^\textsf{T} \delta v^i = 0, \nonumber \\
	&G(v^i) + \nabla_v G(v^i)^\textsf{T} \delta v^i \leq 0, \nonumber
\end{align}
where $H = \nabla_v^2 \Lagr(v^i,\lambda^i,\mu^i)$ denotes the Hessian. The optimization variables are then updated according to
\begin{align}
	w^{i+1} = w^i + \delta w^i, \ \ 
	\lambda^{i+1} = \lambda^{i}_{QP}, \ \ 
	\mu^{i+1} = \mu^{i}_{QP}. \nonumber
\end{align}

Solving a sequence of QPs until convergence of the underlying NLP is still computationally intense. Thus, we opt for an approximate NMPC scheme, also known as the real-time iteration scheme (RTI). The RTI is designed for NMPC problems with a quadratic cost function and uses the Gauss-Newton approximation for the Hessian. Furthermore, in RTI, we only solve a single QP at each time step $t_k$, which leads to a non-converged and generally suboptimal solution. This non-converged solution poses a few challenges. First, the linear approximation of the constraints in \eqref{eq:SQP} no longer guarantees strict constraint satisfaction of the original problem \eqref{eq:generalNLP}. Second, recursive feasibility is lost and \eqref{eq:generalNLP} might become infeasible at some point, which must be avoided at all cost in safety critical systems.

Now, we use Tube-CBF introduced in \ref{controlledinvariancefordtcontrollers} to combine the enhanced performance of RTI with safety and recursive feasibility. To this end, we return again to the RTI formulation in \eqref{eq:SQP} and complement it with two hard constraints. First, we add the constraint $\bar{u}_0 \in \tilde{\mathcal{U}}$. To facilitate notation, we use an affine operator $\Xi_{u_0}$ with $u_0 \equiv \Xi_{u_0}v$ to extract $u_0$ from our optimization variables $v$. Second, we  add the affine CBF condition to our CBF for $\mathcal{C}'$, where we use $x_0 \equiv \Xi_{x_0}v$. This leads to

\begin{align}
	\label{eq:tubeCBFRTI}
	\min_{\delta v^i} \qquad &\nabla_v J(v^i)^\textsf{T} \delta v^i + \frac{1}{2} {\delta v^i}^\textsf{T} H(v^i, \lambda^i, \mu^i) \delta v^i \\
	s.t. \qquad &I(v^i) + \nabla_v I(v^i)^\textsf{T} \delta v^i = 0, \nonumber \\
	&\tilde{G}(v^i) + \nabla_v \tilde{G}(v^i)^\textsf{T} \delta v^i \leq 0, \nonumber \\
	&-\mathrm{CBF}\left(\Xi_{x_0}v^i,\Xi_{u_0}(v^i + \delta v^i)\right) \leq 0, \nonumber \\ 
	&\Xi_{u_0}(v^i + \delta v^i) \in \mathcal{U}'. \nonumber
\end{align}

The formulation \eqref{eq:tubeCBFRTI} provides point-wise constraint satisfaction, but does not yet guarantee recursive feasibility. To obtain recursive feasibility, we additionally perform the Tube-CBF algorithm from section \ref{subsec:tubecbf}. The combination of RTI and Tube-CBF is summarized in algorithm \ref{alg2} and the following theorem.

\begin{algorithm} 
	\caption{ RTI with Tube-CBF} 
	\label{alg2} 
	\begin{algorithmic}
		\State initialize: $\bar{x}(t_0) = x(t_0)$ with $x(t_0) \in \mathcal{C}'$
		\While{IsRunning()}
		\State $\bar{u}_k$ $\leftarrow$ obtain by solving \eqref{eq:tubeCBFRTI} at time $t_k$
		\State $u(t_k) \leftarrow \bar{u}(t_k)+ \kappa(x(t_k), \bar{x}(t_k))$
		\State apply $u(t_k)$ to plant
		\State wait()
		\State measure $x(t_{k+1})$ 
		\State find $\bar{x}(t_{k+1})$ s.t. $x(t_{k+1}) \in \bar{x}(t_{k+1}) \oplus \Omega$
		\EndWhile
	\end{algorithmic} 
\end{algorithm} 

\begin{theorem}[Safe RTI Using Tube-CBF]\label{thm:TubeCBFRTI}
	Consider the dynamical system in \eqref{eq:controlaffinesystem} under the RTI controller \eqref{eq:tubeCBFRTI} and a reduced safe set $\mathcal{C}'$ with a corresponding CBF $h$ according to definition \ref{def:reducedcompactsafeset}. We assume that $x(t_0) \in \mathcal{C}'$. Then $\mathcal{C}$ is forward invariant for \eqref{eq:controlaffinesystem} under \eqref{eq:tubeCBFRTI}  at all times $t \geq 0$ and hence the constraints are strictly satisfied under the Tube-CBF algorithm. Furthermore, since the constraints are satisfied at all times, the RTI is also recursively feasible.
\end{theorem}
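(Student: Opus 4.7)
The plan is to reduce Theorem \ref{thm:TubeCBFRTI} to Theorem \ref{thm:tubeCBF} by verifying that, regardless of whether the inner QP has converged, the single SQP iterate \eqref{eq:tubeCBFRTI} always returns a nominal input $\bar u(t_k)$ that fulfils the two hypotheses Step 2 of the Tube-CBF algorithm requires: namely $\bar u(t_k) \in \mathcal{U}'$ and $\mathrm{CBF}(\bar{x}(t_k),\bar u(t_k))\geq 0$ at the current nominal state $\bar x(t_k) = \Xi_{x_0}v^i$. Once these two properties are established, forward invariance of $\mathcal{C}$ follows directly from Theorem \ref{thm:tubeCBF}, and recursive feasibility of \eqref{eq:tubeCBFRTI} is an easy corollary.

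The key structural observation that makes the proof go through is that, although the softened constraints $\tilde G$ are linearised inside the RTI iteration and thus only approximately enforced, the two hard constraints that were added in \eqref{eq:tubeCBFRTI} are already affine in $\delta v^i$. Indeed, the input-admissibility constraint $\Xi_{u_0}(v^i+\delta v^i)\in\mathcal{U}'$ is affine because $\mathcal{U}'=\mathcal{U}\ominus G$ is a polytope and $\Xi_{u_0}$ is linear in $v$. Likewise, evaluating $\mathrm{CBF}$ at the fixed state $\Xi_{x_0}v^i$ yields, by \eqref{eq:CBFFormulation}, an expression that is affine in the control argument $\Xi_{u_0}(v^i+\delta v^i)$. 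Hence linearisation is exact for these two constraints, and any primal feasible $\delta v^i$ of the RTI-QP produces an iterate $\bar u(t_k)=\Xi_{u_0}(v^i+\delta v^i)$ that satisfies them exactly, not merely to first order.

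With this in hand I would proceed by induction on $k$. Initialization is given by assumption, $x(t_0)=\bar x(t_0)\in\mathcal{C}'$. For the induction step, assume $\bar x(t_k)\in\mathcal{C}'$ and $x(t_k)-\bar x(t_k)\in\Omega$. Solving \eqref{eq:tubeCBFRTI} with $\hat x = \bar x(t_k)$ yields, by the argument in the previous paragraph, an input $\bar u(t_k)\in\mathcal{U}'$ satisfying $\mathrm{CBF}(\bar x(t_k),\bar u(t_k))\geq 0$ on the reduced safe set $\mathcal{C}'$. These are precisely the conditions checked in Step 2 of the Tube-CBF algorithm, so applying $u(t_k)=\bar u(t_k)+\kappa(x(t_k),\bar x(t_k))$ preserves $x(\tau)-\bar x(\tau)\in\Omega$ on $[t_k,t_{k+1})$ by Assumption \ref{ass:omega} and keeps $\bar x(\tau)\in\mathcal{C}'$ by the CBF property, so a valid $\bar x(t_{k+1})\in\mathcal{C}'$ exists at Step 4. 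Invoking Theorem \ref{thm:tubeCBF} then gives $x(t)\in\mathcal{C}$ for all $t\geq 0$.

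Recursive feasibility of \eqref{eq:tubeCBFRTI} is the last thing to check, and it is where I expect most of the attention to focus even though the argument is short. The linearised equality and slack-softened inequality constraints can always be satisfied thanks to the slack variables $S$ in the underlying NLP \eqref{eq:NLP}, so the only potential source of infeasibility lies in the two hard constraints appended in \eqref{eq:tubeCBFRTI}. By Definition \ref{def:reducedcompactsafeset}, $\mathcal{C}'$ is characterised by the CBF $h$ under inputs from $\mathcal{U}'$, which means that for every $\bar x(t_k)\in\mathcal{C}'$ there exists at least one $u\in\mathcal{U}'$ with $\mathrm{CBF}(\bar x(t_k),u)\geq 0$. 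Since the induction shows $\bar x(t_k)\in\mathcal{C}'$ at every $k$, a feasible $\delta v^i$ always exists, completing the proof.
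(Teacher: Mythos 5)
Your proposal is correct and follows essentially the same route as the paper: reduce to Theorem~\ref{thm:tubeCBF} by showing the QP \eqref{eq:tubeCBFRTI} is feasible for every $\bar{x}(t_k)\in\mathcal{C}'$ (via the CBF characterization of $\mathcal{C}'$ under $\mathcal{U}'$ and the polytopic structure of $\mathcal{U}'$) and that its solution exactly satisfies the two hard constraints because they are affine, with recursive feasibility following from Step 4 of the Tube-CBF algorithm. Your explicit remark that the linearization is exact for the appended hard constraints—since the CBF is evaluated at the fixed state $\Xi_{x_0}v^i$ and is affine in the input—is a slightly sharper statement of what the paper leaves implicit, but the argument is the same.
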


\begin{proof}
	We leverage theorem \ref{thm:tubeCBF} for the proof. Theorem \ref{thm:tubeCBF} states that $\mathcal{C}$ is forward invariant for all times $t \geq 0$ if $x(0) \in \mathcal{C}'$ and there is a discrete-time controller that provides a nominal input $\bar{u}(t_k) \in \mathcal{U}'$ such that $\text{CBF}(\bar{x}(t_k),\bar{u}(t_k)) \geq 0$ for all $k \geq 0$.
	
	First, note that since $\mathcal{U}$ and $G$ are convex polytopes, $\mathcal{U}'$ is a convex polytope. Thus, $u \in \mathcal{U}' \Leftrightarrow Au \leq b$ for some $A$ and $b$, i.e., we can encode $\bar{u} \in \mathcal{U}' $ as a set of affine conditions. From the definition of CBFs we have that $ \mathcal{U}' \cap \{u \ | \ \text{CBF}(\bar{x},u) \geq 0 \} \neq \varnothing$ for all $\bar{x} \in \mathcal{C}'$. Thus, there exists a $\bar{u}(t_k)$ such that $A\bar{u}(t_k) \leq b$ and $\text{CBF}(\bar{x},\bar{u}) \geq 0$ and hence \eqref{eq:tubeCBFRTI} is feasible for all $\bar{x} \in \mathcal{C}'$, as all other constraints are soft constraints. By assumption $x(0) \in \mathcal{C}'$ and thus we can use  theorem \ref{thm:tubeCBF} to guarantee forward invariance of  $\mathcal{C}$ and correspondingly safety for \eqref{eq:controlaffinesystem}. And since the constraints are affine and thus convex, the feasible $\bar{u}(t_k) $ can be found with the QP.
	
	Recursive feasibility can be seen directly from step 4 in the Tube-CBF algorithm. Since the nominal state $\bar{x}(t_k) \in \mathcal{C}'$ for all $k\geq 0$ and since \eqref{eq:tubeCBFRTI} is feasible for all $\bar{x} \in \mathcal{C}'$, we see that \eqref{eq:tubeCBFRTI} is recursively feasible for all time steps $t_k$ with $k \geq 0$.
\end{proof}

\section{Results}\label{sec:results}

To demonstrate the performance of the proposed algorithms DBC and Tube-CBF, we performed simulations and hardware experiments. We simulated the DBC as a safety filter in conjunction with a nominal LQR controller, and we implemented the Tube-CBF algorithm with a nominal RTI on a mini-Segway and compared it to several baseline controllers. Figure \ref{fig:mini-segway_1} shows the mini-Segway, which is equipped with an Arduino capable ATmega32U4 MCU, wheel encoders, low-level motor controllers, and an LSM6DS33 IMU. The mini-Segway connects to a Raspberry~Pi model 3B+ through I2C which runs Ubuntu 18.04 and performs the controller computations. We only run the system in a planar mode, i.e., apply the same input to both wheels. The publication \cite{kim2015dynamic} presents the dynamics of the mini-Segway.

\subsection{Discrete Barrier Condition}

Although the DBC provides a very general formulation for safety guarantees, the approach is too conservative for the particularly fast dynamics of our hardware. Specifically, the resulting Lipschitz constants are large and consequently the feasible region has collapsed.

However, the DBC is applicable for dynamics with smaller Lipschitz constants. To show the feasibility of the DBC we performed simulations for two different mini-Segways. One has the same dynamics as our hardware, whereas the other has slower dynamics and consequently smaller Lipschitz constants. We use an LQR controller in conjunction with the DBC safety filter \eqref{eq:affconditionsafetyfilter} to ensure safety. We simulate a regulation task on the position and apply the safety filter at various rates. Without the filter, the system is unsafe for both the fast and slow dynamics. For sufficiently large filter rates and slow dynamics, the DBC is viable and successfully keeps the system safe. Figure \ref{fig:simulationone} shows the simulation results.


\begin{figure}[t]
	\centering
	\includegraphics[width=0.35\textwidth]{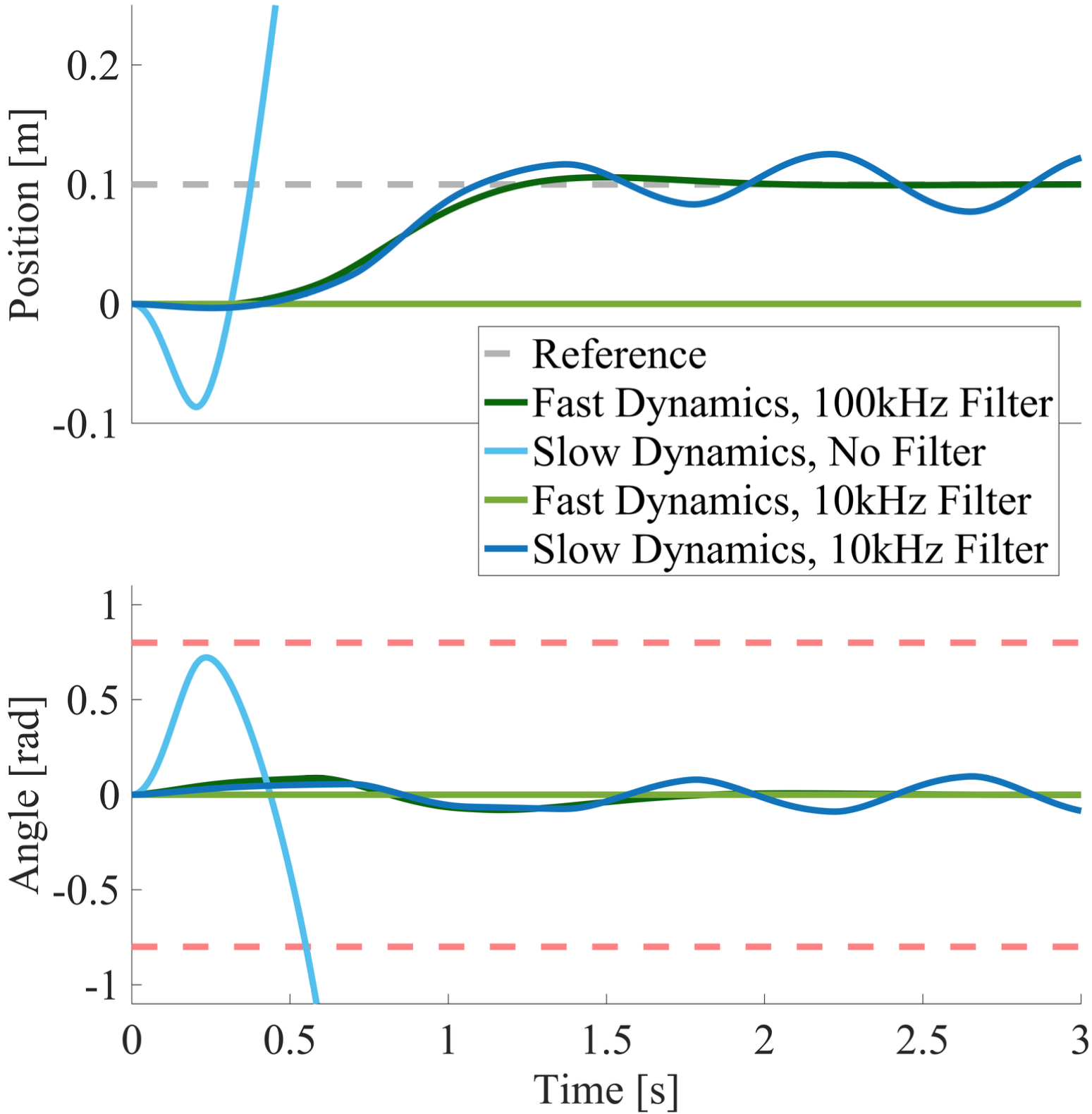}
	\caption{
		The figure shows the simulation of two mini-Segways with different dynamics . We use an LQR controller in conjunction with a DBC safety filter \eqref{eq:affconditionsafetyfilter} to keep the system safe. We simulate a regulation task with the safety filter applied at various rates. Without the filter, the systems are unsafe (only slow unsafe dynamics are shown). For sufficiently large filter rates and slow dynamics, the DBC is viable and successfully keeps the system safe.}
	\label{fig:simulationone}
\end{figure}

\subsection{Tube-CBF Implementation on Hardware}

\begin{figure}[t]
	\centering
	\includegraphics[width=0.33\textwidth]{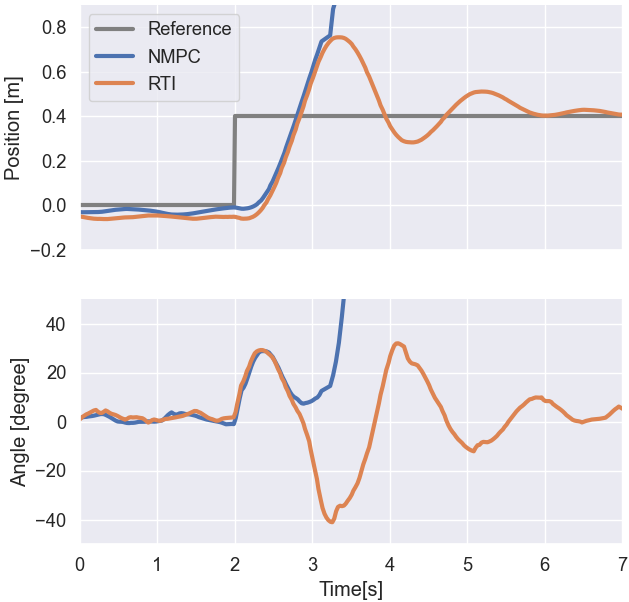}
	\caption{The figure shows a full NMPC controller with $N = 15$ and a plain RTI controller with $N = 50$, both applied at $\SI{33}{\hertz}$. At $t = \SI{2}{\second}$ the reference position changes from $\SI{0}{\meter}$ to $\SI{0.4}{\meter}$. While the full NMPC controller violates the safe set, the RTI remains safe due to its longer prediction horizon.}
	\label{fig:firstexperiment}
\end{figure}

We implement the proposed combination of RTI with Tube-CBF on the mini-Segway and compare its performance to three baseline controllers:
\begin{itemize}
	\item Full NMPC: We tackle the NLP \eqref{eq:NLP} with an interior-point solver
	\item Plain RTI: RTI without CBFs
	\item RTI with CBF: We solve the QP \eqref{eq:tubeCBFRTI} with the affine CBF condition, but do not use the Tube-CBF algorithm
\end{itemize}

\subsection{Derivation of Tube-CBF}
The only hard constraint present in our system is the motor input voltage of  $\pm \SI{5.4}{\volt}$. First, we design an auxiliary controller. We linearize the mini-Segway's dynamics around the origin and compute a full state feedback controller using LQR, i.e.,

\begin{equation}
	K_{aux} = 
	\AverageSmallMatrix{
		-3.7304 & 
		-3.4806 & 
		-0.7343
	} \nonumber
\end{equation}

We need the invariant tube $\Omega$ for $K_{aux}$ to perform the constraint tightening to arrive at a reduced safe set $\mathcal{C}'$ and the corresponding input constraints $\mathcal{U}'$. While \cite{yu2013tube} presents and approach to compute $\Omega$ using Lipschitz continuity, our system's closed-loop Lipschitz constants are large and the approach in \cite{yu2013tube} is too conservative. Thus, we opted for a \textit{not exact} constraint tightening of $\SI{33}{\percent}$ or $\pm \SI{1.8}{\volt}$, which we found to work well in practice.

Now that we have an auxiliary controller and the corresponding constraint tightening, we compute a CBF for $\mathcal{C}'$. To this end, we perform a numerical Hamilton-Jacobi reachability-analysis \cite{mitchell2005time} on a $99 \times 99 \times 99$ sampling grid of our state space $x = [\dot{s}, \theta, \dot{\theta}]^\textsf{T}$ considering our tightened motor voltage constraints of $\pm \SI{3.6}{\volt}$. We turned the sampled data into an analytical representation of a CBF using polynomial regression.


\subsection{Controller Design}

All RTI and NMPC controllers are designed with the same dynamics and a sampling time of $\SI{70}{\milli\s}$ using FORCES PRO \cite{FORCESPro}, with a quadratic cost of the form

\begin{equation}
	l(x,u) = \lVert \Delta u \rVert_{R_1} + \lVert u \rVert_{R_2} + \lVert x \rVert_{Q}. \nonumber
\end{equation}

When it comes to computational performance, there is a trade-off between prediction horizon length and update frequency. A longer horizon increases computational intensity, as does a higher frequency. We found the minimal frequency required for stability to be around $\SI{33}{\hertz}$. The following configurations exploit all of the available computational resources.

\begin{figure}[t]
	\centering
	\includegraphics[width=0.33\textwidth]{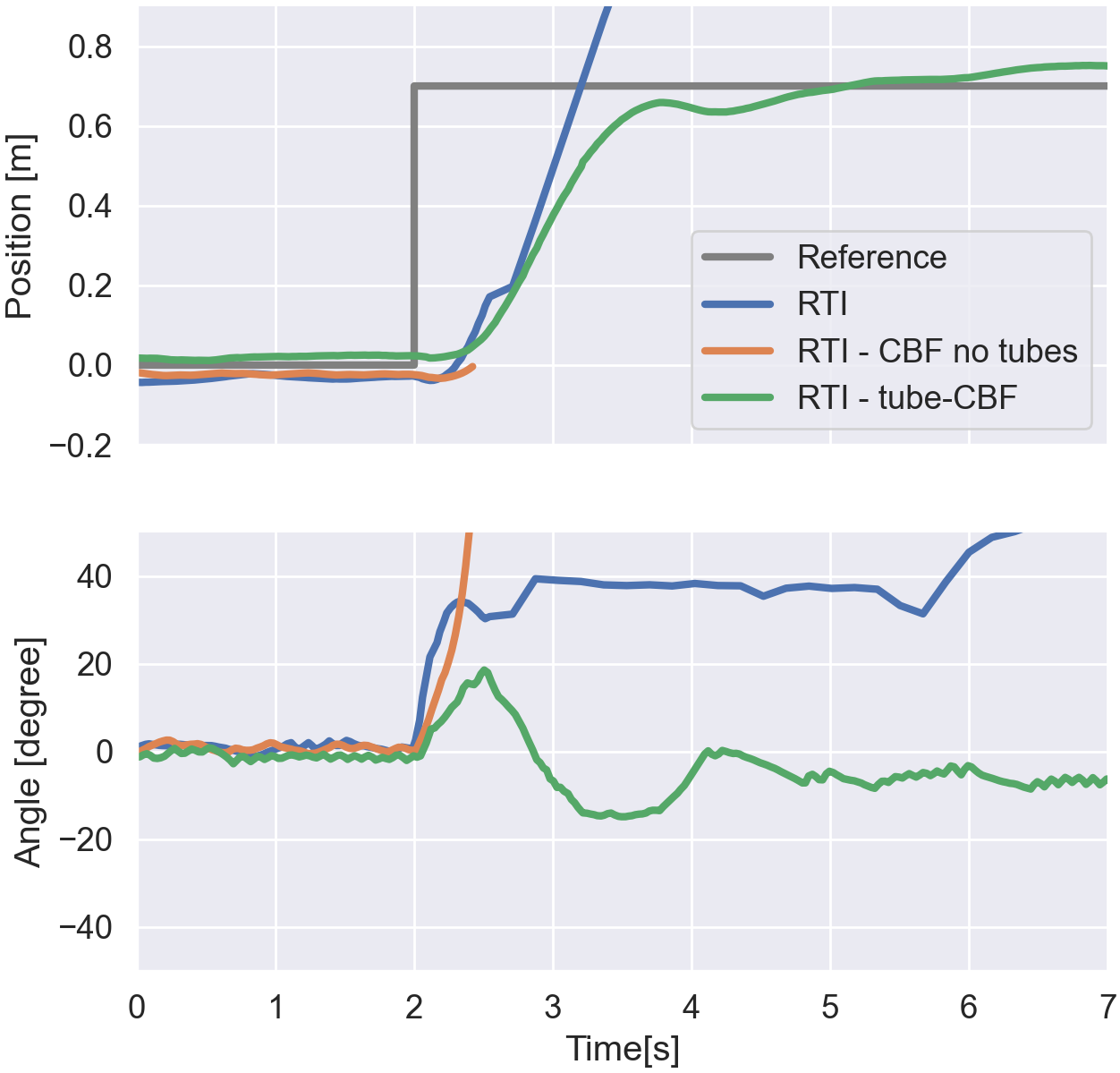}
	\caption{The figure shows  three RTI controllers, one with $N = 50$, applied at $\SI{33}{\hertz}$ and two with $N = 15$ applied at $\SI{100}{\hertz}$, where one uses Tube-CBF and the other uses the affine CBF condition without the Tube-CBF algorithm. At $t = \SI{2}{\second}$ the reference position changes from $\SI{0}{\meter}$ to $\SI{0.7}{\meter}$. Only the RTI controller with Tube-CBF remains safe.}
	\label{fig:secondexperiment}
\end{figure}

\begin{table}[h!]
	\begin{center}
		\label{tab:modelquantities}
		\begin{tabular}{l|c|c} 
			\textbf{Controller} & \textbf{Horizon Length} & \textbf{Frequency} \\
			\hline
			Full NMPC & $15$ & $\SI{33}{\hertz}$\\
			RTI & $50$ & $\SI{33}{\hertz}$ \\
			RTI - with CBF & $15$ & $\SI{100}{\hertz}$ \\
			RTI - with Tube-CBF & $15$ & $\SI{100}{\hertz}$ \\
		\end{tabular}
	\end{center}
\end{table}

\subsection{Control Task and Results}
The experimental task is to track a reference position. The results of the experiments are captured in figures \ref{fig:firstexperiment} and \ref{fig:secondexperiment}. Figure \ref{fig:firstexperiment} shows the performance of the full NMPC and the RTI. While the full NMPC cannot handle a $\SI{0.4}{\meter}$ step in the reference position, the RTI controller remains safe due to its longer prediction horizon. However, increasing the step in the reference position from $\SI{0.4}{\meter}$  to $\SI{0.7}{\meter}$ renders also the RTI controller unsafe, as shown in figure \ref{fig:secondexperiment}. Extending the RTI controller with our Tube-CBF algorithm successfully keeps the system safe. It's noteworthy that just extending the RTI formulation with the affine CBF condition but without applying the Tube-CBF algorithm, does not lead to safety. In fact, the affine CBF condition becomes infeasible shortly after the step in the reference position.

Thus, this shows that extending MPC with CBFs enhances safety. When using the affine CBF condition, compensating the discretization error is necessary, for example with the Tube-CBF algorithm.

\section{Conclusion}
In this work, we extend CBFs to account for discretization error and guarantee constraints satisfaction for nonlinear control-affine systems under discrete-time nominal controllers. We presented two algorithms that result in enforcing (a set of) affine conditions at a finite rate to guarantee safety. The DBC relies on Lipschitz continuity to capture the discretization error. The Tube-CBF algorithm relies on an auxiliary controller that spans an invariant tube to compensate for arising discretization errors. We combine Tube-CBF and approximate NMPC to obtain an efficient algorithm with continuous-time safety guarantees despite the approximate nature of the controller. We validate our proposed approach in hardware experiments on a mini-Segway, and demonstrate the need to account for discretization errors to guarantee safety at all times given limited computational resources.

Future work will focus on making auxiliary controllers and robust control invariant sets available for a larger number of systems. Furthermore, we will extend our formulation to include more real-world challenges, like external disturbances and model mismatch to provide more realistic implementations.



%


\bibliographystyle{IEEEtran}
\bibliography{bibliography}


\end{document}